\titleformat*{\section}{\bf\large\center\uppercase} 
\newcommand{\GG}[1]{}
\theoremstyle{definition}
\newtheorem{prop}{Proposition}
\newtheorem{lemma}{Lemma}
\newtheorem{example}{Example}
\newtheorem{corollary}{Corollary}
\newtheorem*{corollary*}{Corollary}
\apptocmd{\sloppy}{\hbadness 10000\relax}{}{} 
\begin{document}
\doublespacing
\title{\bf On Randomization-based and Regression-based Inferences for $2^K$ Factorial Designs}
\author{Jiannan Lu\footnote{Address for correspondence: Jiannan Lu, Microsoft Corporation, One Microsoft Way, Redmond, Washington 98052, USA.
Email: \texttt{jiannl@microsoft.com}}~\\Microsoft Corporation}
\date{}
\maketitle
\begin{abstract}
We extend the randomization-based causal inference framework in \cite{Dasgupta:2015} for general $2^K$ factorial designs, and demonstrate the equivalence between regression-based and randomization-based inferences. Consequently, we justify the use of regression-based methods in $2^K$ factorial designs from a finite-population perspective.
\end{abstract}
\textbf{Keywords:} Causal inference; potential outcome; unbalanced design; Huber-White estimator.

\section{Introduction}\label{sec:intro}

Factorial designs, originally introduced for agricultural experiments \citep{Fisher:1935, Yates:1937}, have gained more popularity in recent times because of their abilities to investigate multiple treatment factors simultaneously. As pointed out by \cite{Ding:2014}, although rooted in randomization theory \citep[e.g.,][]{Kempthrone:1952}, factorial designs have been dominantly analyzed by regression methods in practice. Unfortunately, however, regression-based inference might not be suitable under certain circumstances. For example, several researchers \citep[e.g.,][]{Miller:2006, Lu:2015} have pointed out that in many randomized experiments we cannot treat the experimental units as a random sample drawn from a hypothetical super-population, and should instead restrict the scope of inference to the finite-population of the experimental units themselves. Realizing the inherent deficiencies of regression-based inference, \cite{Dasgupta:2015} advocated conducting randomization-based inference for factorial designs by utilizing the concept of potential outcomes \citep{Neyman:1923, Rubin:1974}. The proposed framework for balanced $2^K$ factorial designs is flexible, interpretable and applicable to both finite-population and super-population settings. 

Given the advantages of randomization-based inference, it is necessary to generalize the framework in \cite{Dasgupta:2015} for more general, i.e., unbalanced, $2^K$ factorial designs. Moreover, it is of great importance to reconcile randomization-based and regression-based inferences, i.e., the point estimators of the factorial effects and their corresponding confidence regions. However, although the equivalence between randomization-based and regression-based inferences for randomized treatment-control studies (i.e., $2^1$ factorial designs) has been well established in the existing literature \citep{Schochet:2010, Samii:2012, Lin:2013}, similar discussions for $2^K$ factorial designs appear to be absent. In this paper, we fulfill the aforementioned two-fold task. 

The paper proceeds as follows. Section \ref{sec:2k} extends the randomization-based inference framework in \cite{Dasgupta:2015} to general $2^K$ factorial designs. Section \ref{sec:reg} demonstrates the equivalence between randomization-based and regression-based inferences for $2^K$ factorial designs. Section \ref{sec:extension} considers extensions, and Section \ref{sec:discuss} concludes and discusses possible future directions. 

\section{randomization-based inference for general $2^K$ factorial designs}\label{sec:2k}

\subsection{$2^K$ factorial designs}

Consider $K$ distinct factors, each with two levels -1 and 1. We construct the model matrix \citep{Wu:2009} $\bm H = (\bm h_0, \ldots, \bm h_{2^K-1})$ as follows: 
\begin{itemize}
\item let $\bm h_0 = \bm 1_{2^K};$ 
\item for $k=1,\ldots,K$, construct $\bm h_k$ by letting its first $2^{K-k}$ entries be -1, the next $2^{K-k}$ entries be 1, and repeating $2^{k-1}$ times;
\item for $k=K+1, \ldots, K+{K \choose 2},$ let $\bm h_k = \bm h_{k_1} \cdot \bm h_{k_2},$ where $k_1, k_2 \in \{1, \ldots, K \};$
\item[] $\ldots$
\item let $\bm h_{2^K-1}=\bm h_1\cdot \ldots \cdot \bm h_K.$
\end{itemize}
For $j=1, \dots, 2^K,$ let $\tilde{\bm h}_{j-1}$ denote the $j$th row of the model matrix $\bm H.$ A well-known fact is that the model matrix $\bm H$ is orthogonal, i.e.,
\begin{equation}\label{eq:orth}
\bm H \bm H^\prime = ( \tilde{\bm h}_j \tilde{\bm h}_{j^\prime}^\prime )_{0 \le j, j^\prime \le 2^k-1} = 2^K \bm I_{2^K},
\quad
\bm H^\prime \bm H = \sum_{j=1}^{2^K} \tilde{\bm h}_j^\prime \tilde{\bm h}_j = 2^K \bm I_{2^K}
\end{equation} 
The $j$th row of $\tilde{\bm H} = (\bm h_1, \ldots, \bm h_K)$ is the $j$th treatment combination $\bm z_j,$ and the columns of $\bm H$ define the factorial effects. To be specific, the first column $\bm h_0$ corresponds to the null effect, the next $K$ columns $\bm h_1, \ldots, \bm h_K$ correspond to the main effects of the $K$ factors, the next ${K \choose 2}$ columns $\bm h_{K+1}, \ldots, \bm h_{K+{K \choose 2}}$ correspond to the two-way interactions et. al., and eventually the last column $\bm h_{2^K-1}$ corresponds to the $K$-factor interaction. 

\begin{example}
For $2^2$ factorial designs, the model matrix is:
\begin{equation*}
\bm H =
\bordermatrix{& \bm h_0 & \bm h_1& \bm h_2 & \bm h_3\cr
             \tilde{\bm h}_0 & 1  & -1 &  -1  & 1 \cr
             \tilde{\bm h}_1 & 1 & -1 &  1  & -1 \cr
             \tilde{\bm h}_2 & 1  & 1 &  -1  & -1 \cr
             \tilde{\bm h}_3 & 1 & 1 & 1 & 1}.
\end{equation*}
The four treatment combinations are $\bm z_1=(-1, -1),$ $\bm z_2=(-1, 1),$ $\bm z_3=(1, -1)$ and $\bm z_4=(1, 1).$ We represent the main effects of factors 1 and 2 by $\bm h_1 = (-1,-1,1,1)^\prime$ and $\bm h_2=(-1,1,-1,1)^\prime$ respectively, and the two-way interaction by $\bm h_3=(1,-1,1,-1)^\prime.$ 
\end{example}

\subsection{Randomization-based Inference}

For consistency, we adopt the notations in \cite{Dasgupta:2015}. Let $N \ge 2^{K+1}$ be the number of experimental units. Under the Stable Unit Treatment Value Assumption \citep{Rubin:1980}, for unit $i,$ we denote its potential outcome under treatment combination $\bm z_j$ as $Y_i(\bm z_j),$ for $j=1, \ldots, 2^K.$ Let $\bm Y_i = \{ Y_i(\bm z_1), \ldots, Y_i(\bm z_{2^K}) \}^{\prime},$ and we define the factorial effect vector of unit $i$ as
\begin{equation}\label{eq:ie}
\bm \tau_i = \frac{1}{2^{(K-1)}} \bm H^\prime {\bm Y}_i.
\end{equation}
Having defined the potential outcomes and factorial effects on the individual-level, we shift focus to the population-level. For all $j,$ we let
$$
\bar Y(\bm z_j) = \frac{1}{N} \sum_{i=1}^N Y_i(\bm z_j)
$$
be the average potential outcome under treatment combination $\bm z_j,$ across all experimental units.
Let 
$
\bar{\bm Y} = \{ \bar Y(\bm z_1), \ldots, \bar Y(\bm z_{2^K}) \}^\prime,
$
and we define the population-level factorial effect vector as
\begin{equation}
\label{eq:fe}
\bm \tau = \frac{1}{N} \sum_{i=1}^N \bm \tau_i = \frac{1}{2^{(K-1)}} \bm H^\prime \bar{\bm Y}. 
\end{equation}

In this paper we consider general $2^K$ factorial designs, where we randomly assign $n_j \ge 2$ units to treatment $\bm z_j,$ for $j=1, \ldots, 2^K.$ Note that 
$
\sum_{j=1}^{2^K}  n_j = N.
$
For unit $i,$ we let 
$$
W_i(\bm z_j)
= 
\begin{cases}
1, & \text{if unit $i$ is assigned treatment $\bm z_j,$ } \\
0, & \text{otherwise.} \\
\end{cases}
$$
The observed outcome of unit $i$ is
$$
Y_i^\textrm{obs} = \sum_{j=1}^{2^K} W_i(\bm z_j) Y_i(\bm z_j).
$$
Let
$
\bm Y^\textrm{obs} = (Y_1^\textrm{obs}, \ldots, Y_N^\textrm{obs})^\prime
$
be the vector of all observed outcomes, and
$$
\bar Y^\textrm{obs}(\bm z_j) 
=
\frac{1}{n_j} \sum_{i:W_i(\bm z_j)=1} Y_i^{\textrm{obs}} 
=
\frac{1}{n_j} \sum_{i=1}^N W_i(\bm z_j) Y_i(\bm z_j).
$$
be the average observed outcome across all experimental units assigned to treatment combination $\bm z_j,$ and 
$
\bar{\bm Y}^\textrm{obs} = \{ \bar Y^\textrm{obs}(\bm z_1) , \ldots, \bar Y^\textrm{obs}(\bm z_{2^K}) \}^\prime.
$
\cite{Dasgupta:2015} defined the randomization-based estimator for $\bm \tau$ as
\begin{equation}
\label{eq:est}
\hat {\bm \tau}_{\textrm{RI}} =  \frac{1}{2^{(K-1)}} \bm H^\prime \bar{\bm Y}^\textrm{obs}, 
\end{equation}
whose randomness is solely from the treatment assignments. 

The following lemma plays an important role in deriving the sampling mean and covariance of the randomization-based estimator, and is also of independent interest. It is a slight modification of Lemma 4 in \cite{Dasgupta:2015}, and therefore we omit its proof.

\begin{lemma}\label{lemma:cov} 
Let the variance of potential outcomes for treatment $\bm z_j$ be
$$
S^2(\bm z_j) = \frac{1}{N-1} \sum_{i=1}^N \{ Y_i(\bm z_j) - \bar Y(\bm z_j) \}^2,
$$
and the covariance of potential outcomes for treatments $\bm z_j$ and $\bm z_{j^\prime}$ be 
$$
S(\bm z_j, \bm z_{j^\prime}) = \frac{1}{N-1} \sum_{i=1}^N \{ Y_i(\bm z_j) - \bar Y(\bm z_j) \} \{ Y_i(\bm z_{j^\prime}) - \bar Y(\bm z_{j^\prime}) \}.
$$
The mean and covariance of $\bar{\bm Y}^\textrm{obs}$ are respectively
$$
\mathrm E \left( \bar{\bm Y}^\textrm{obs} \right) = \bar{\bm Y},
\quad
\textrm{Cov} ( \bar{\bm Y}^\textrm{obs} ) = 
\begin{bmatrix}
    \left(\frac{1}{n_1} - \frac{1}{N} \right) S^2(\bm z_1)  & -\frac{1}{N} S(\bm z_1, \bm z_2) & \dots  & - \frac{1}{N} S(\bm z_1, \bm z_{2^K}) \\
   -\frac{1}{N} S(\bm z_2, \bm z_1) & \left(\frac{1}{n_2} - \frac{1}{N} \right) S^2(\bm z_2) & \ldots & -\frac{1}{N} S(\bm z_2, \bm z_{2^K}) \\
    \vdots  & \vdots & \ddots & \ldots \\
    -\frac{1}{N} S(\bm z_{2^K}, \bm z_1)  & \dots  & \ldots & \left(\frac{1}{n_{J}} - \frac{1}{N} \right) S^2(\bm z_{2^K}) \\
\end{bmatrix}.
$$
\end{lemma}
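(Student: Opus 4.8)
The plan is to reduce the entire computation to the first two moments of the assignment indicators $W_i(\bm z_j)$ and then expand $\bar Y^\textrm{obs}(\bm z_j) = n_j^{-1}\sum_{i=1}^N W_i(\bm z_j) Y_i(\bm z_j)$ using bilinearity, since the potential outcomes $Y_i(\bm z_j)$ are fixed constants and all randomness comes from the assignment mechanism. Because the design assigns units by a uniformly random partition of the $N$ units into cells of sizes $n_1, \ldots, n_{2^K}$, the $W_i(\bm z_j)$ are sampling-without-replacement indicators, so every quantity I need is an elementary finite-population sum.

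First I would record the indicator moments. By exchangeability of the assignment, $\mathrm E\{W_i(\bm z_j)\} = n_j/N$, which immediately yields $\mathrm E\{\bar Y^\textrm{obs}(\bm z_j)\} = n_j^{-1}\sum_{i=1}^N (n_j/N)\, Y_i(\bm z_j) = \bar Y(\bm z_j)$ and hence the mean claim. For the second moments I would separate four cases: same unit and same treatment gives $\mathrm{Var}\{W_i(\bm z_j)\} = (n_j/N)(1 - n_j/N)$; same unit and distinct treatments gives $\mathrm{Cov}\{W_i(\bm z_j), W_i(\bm z_{j'})\} = -n_j n_{j'}/N^2$, which uses the structural fact that a unit receives exactly one treatment so that $W_i(\bm z_j) W_i(\bm z_{j'}) = 0$; distinct units and the same treatment gives $\mathrm{Cov}\{W_i(\bm z_j), W_{i'}(\bm z_j)\} = -n_j(N - n_j)/\{N^2(N-1)\}$; and distinct units with distinct treatments gives $\mathrm{Cov}\{W_i(\bm z_j), W_{i'}(\bm z_{j'})\} = n_j n_{j'}/\{N^2(N-1)\}$. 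The last two follow from the pairwise inclusion probabilities $n_j(n_j-1)/\{N(N-1)\}$ and $n_j n_{j'}/\{N(N-1)\}$ respectively.

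Next, for a diagonal entry I would write $\mathrm{Var}\{\bar Y^\textrm{obs}(\bm z_j)\}$ as a double sum over $i$ and $i'$, split off the $i = i'$ terms (using case one) from the $i \ne i'$ terms (using case three), and collapse the cross term through $\sum_{i \ne i'} Y_i(\bm z_j) Y_{i'}(\bm z_j) = \{\sum_i Y_i(\bm z_j)\}^2 - \sum_i Y_i(\bm z_j)^2$. After combining, the surviving factor is $\sum_i \{Y_i(\bm z_j) - \bar Y(\bm z_j)\}^2 = (N-1) S^2(\bm z_j)$, with coefficient $(1/n_j - 1/N)$, matching the stated diagonal. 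An off-diagonal entry $\mathrm{Cov}\{\bar Y^\textrm{obs}(\bm z_j), \bar Y^\textrm{obs}(\bm z_{j'})\}$ runs the same way but now combines cases two and four, whose contributions carry opposite signs; using $\sum_{i \ne i'} Y_i(\bm z_j) Y_{i'}(\bm z_{j'}) = \{\sum_i Y_i(\bm z_j)\}\{\sum_i Y_i(\bm z_{j'})\} - \sum_i Y_i(\bm z_j) Y_i(\bm z_{j'})$, the expression telescopes to $-N^{-1} \sum_i \{Y_i(\bm z_j) - \bar Y(\bm z_j)\}\{Y_i(\bm z_{j'}) - \bar Y(\bm z_{j'})\} = -N^{-1} S(\bm z_j, \bm z_{j'})$.

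I expect the main obstacle to be bookkeeping rather than depth: the unbalanced design makes the cell sizes $n_j$ differ, so the four covariance cases must be tracked separately without the simplifications a common group size would afford. The single genuinely structural ingredient, which produces the sign in the off-diagonal terms, is that one unit cannot occupy two cells, forcing $\mathrm E\{W_i(\bm z_j) W_i(\bm z_{j'})\} = 0$; everything else is algebraic rearrangement of finite-population sums into the variance and covariance defined in the lemma statement.
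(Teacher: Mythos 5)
Your proposal is correct, and it supplies a proof that the paper itself omits: the text simply defers to Lemma 4 of Dasgupta et al.\ (2015), noting the present statement is a slight modification of it. Your derivation via the first two moments of the assignment indicators is exactly the standard finite-population argument underlying that cited lemma; I checked all four covariance cases (in particular $-n_j(N-n_j)/\{N^2(N-1)\}$ for distinct units with the same treatment and $n_j n_{j'}/\{N^2(N-1)\}$ for distinct units with distinct treatments) and the subsequent collapsing of the double sums, and everything lands on the stated diagonal and off-diagonal entries.
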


\begin{prop}\label{prop:var}
$\hat {\bm \tau}_\textrm{RI}$ is unbiased, and its sampling covariance is
\begin{equation}\label{eq:var}
\mathrm{Cov} ( \hat{\bm \tau}_\textrm{RI} ) 
= \frac{1}{2^{2(K-1)}} 
\sum_{j=1}^{2^K} \frac{1}{n_j} \tilde{\bm h}_j^\prime \tilde{\bm h}_j S^2(\bm z_j) 
- \frac{1}{N(N-1)} \sum_{i=1}^N (\bm \tau_i - \bm \tau)(\bm \tau_i - \bm \tau)^\prime,
\end{equation}
\end{prop}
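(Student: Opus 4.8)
The plan is to exploit the linearity of $\hat{\bm\tau}_\textrm{RI}$ in $\bar{\bm Y}^\textrm{obs}$ and read off both moments directly from Lemma \ref{lemma:cov}. Since $\hat{\bm\tau}_\textrm{RI} = 2^{-(K-1)} \bm H^\prime \bar{\bm Y}^\textrm{obs}$ is a fixed linear transformation of $\bar{\bm Y}^\textrm{obs}$, unbiasedness is immediate: $\mathrm E(\hat{\bm\tau}_\textrm{RI}) = 2^{-(K-1)} \bm H^\prime \mathrm E(\bar{\bm Y}^\textrm{obs}) = 2^{-(K-1)} \bm H^\prime \bar{\bm Y} = \bm\tau$ by the definition \eqref{eq:fe}. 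The covariance is likewise $\textrm{Cov}(\hat{\bm\tau}_\textrm{RI}) = 2^{-2(K-1)} \bm H^\prime \textrm{Cov}(\bar{\bm Y}^\textrm{obs}) \bm H$, so the entire task reduces to evaluating the quadratic form $\bm H^\prime \textrm{Cov}(\bar{\bm Y}^\textrm{obs}) \bm H$.

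Next I would split the covariance matrix from Lemma \ref{lemma:cov} into two pieces according to the two terms $1/n_j$ and $-1/N$ that appear in its entries. Writing $\bm e_j$ for the $j$th standard basis vector and $\bm S$ for the $2^K \times 2^K$ matrix with $(j,j^\prime)$ entry $S(\bm z_j, \bm z_{j^\prime})$, I obtain the decomposition
$$
\textrm{Cov}(\bar{\bm Y}^\textrm{obs}) = \sum_{j=1}^{2^K} \frac{S^2(\bm z_j)}{n_j}\, \bm e_j \bm e_j^\prime - \frac{1}{N} \bm S,
$$
in which the diagonal of the first sum supplies $S^2(\bm z_j)/n_j$ while $-\bm S/N$ provides both the $-S^2(\bm z_j)/N$ correction on the diagonal and every off-diagonal entry. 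Conjugating by $\bm H$ then treats the two pieces separately.

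For the first piece I would use the fact that $\bm e_j^\prime \bm H$ is precisely the row of $\bm H$ associated with treatment $\bm z_j$, namely $\tilde{\bm h}_j$ in the notation of \eqref{eq:var}, so that $\bm H^\prime \bm e_j \bm e_j^\prime \bm H = \tilde{\bm h}_j^\prime \tilde{\bm h}_j$; multiplying by $2^{-2(K-1)}$ reproduces the first term of \eqref{eq:var} verbatim. For the second piece I would rewrite $\bm S$ in outer-product form, $\bm S = (N-1)^{-1} \sum_{i=1}^N (\bm Y_i - \bar{\bm Y})(\bm Y_i - \bar{\bm Y})^\prime$, and invoke the identity linking individual effects to centered potential outcomes: from \eqref{eq:ie} and \eqref{eq:fe}, $\bm H^\prime(\bm Y_i - \bar{\bm Y}) = 2^{(K-1)}(\bm\tau_i - \bm\tau)$. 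Substituting yields $\bm H^\prime \bm S \bm H = 2^{2(K-1)}(N-1)^{-1} \sum_{i=1}^N (\bm\tau_i - \bm\tau)(\bm\tau_i - \bm\tau)^\prime$, and the factor $2^{2(K-1)}$ cancels against $2^{-2(K-1)}$, leaving exactly the $-[N(N-1)]^{-1}$ term of \eqref{eq:var}.

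There is no genuine obstacle here; once the covariance matrix is split correctly the argument is pure bookkeeping. The one step deserving care is the decomposition above: one must check that pulling out $-\bm S/N$ accounts at once for the diagonal shrinkage $-S^2(\bm z_j)/N$ and for all off-diagonal entries, so that the residual diagonal is $S^2(\bm z_j)/n_j$ rather than $(1/n_j - 1/N)S^2(\bm z_j)$. Everything else follows from the two elementary identities $\bm e_j^\prime \bm H = \tilde{\bm h}_j$ and $\bm H^\prime(\bm Y_i - \bar{\bm Y}) = 2^{(K-1)}(\bm\tau_i - \bm\tau)$, each a direct consequence of the definitions; notably, the orthogonality relation \eqref{eq:orth} is not needed for this result.
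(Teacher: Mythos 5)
Your proof is correct and follows essentially the same route as the paper's: both obtain unbiasedness and the quadratic form $2^{-2(K-1)}\bm H^\prime \mathrm{Cov}(\bar{\bm Y}^\textrm{obs})\bm H$ from Lemma \ref{lemma:cov}, and both rely on the identity $\bm H^\prime(\bm Y_i-\bar{\bm Y})=2^{K-1}(\bm\tau_i-\bm\tau)$ to convert the potential-outcome covariances into the $\sum_i(\bm\tau_i-\bm\tau)(\bm\tau_i-\bm\tau)^\prime$ term. The only difference is bookkeeping: you decompose $\mathrm{Cov}(\bar{\bm Y}^\textrm{obs})$ into $\sum_j n_j^{-1}S^2(\bm z_j)\bm e_j\bm e_j^\prime-N^{-1}\bm S$ up front, whereas the paper expands entrywise and then back-substitutes for the off-diagonal sum; both are valid and neither uses \eqref{eq:orth}.
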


\begin{proof}[Proof of Proposition \ref{prop:var}]
The unbiasedness of $\hat {\bm \tau}_\textrm{RI}$ is direct from the first half of Lemma \ref{lemma:cov}. Next we derive the covariance. On the one hand, by the second half of Lemma \ref{lemma:cov},
\begin{eqnarray}\label{eq:var-1}
\mathrm{Cov} ( \hat{\bm \tau}_\textrm{RI} ) 
& = & \frac{1}{2^{2(K-1)}} \bm H^\prime \textrm{Cov} ( \bar{\bm Y}^\textrm{obs} ) \bm H \nonumber \\
& = & \frac{1}{2^{2(K-1)}} 
\left\{ 
\sum_{j=1}^{2^K} \tilde{\bm h}_j^\prime \tilde{\bm h}_j \left(\frac{1}{n_j} - \frac{1}{N} \right) S^2(\bm z_j) - \frac{1}{N} \sum_{j \ne j^\prime} \tilde{\bm h}_j^\prime \tilde{\bm h}_{j^\prime} S(\bm z_j, \bm z_{j^\prime})
\right\}.
\end{eqnarray}
On the other hand, by \eqref{eq:ie} and \eqref{eq:fe} we have
\begin{eqnarray}
\frac{1}{N-1} \sum_{i=1}^N (\bm \tau_i - \bm \tau) (\bm \tau_i - \bm \tau)^\prime
& = & \frac{1}{2^{2(K-1)}} \bm H^\prime \left\{ \frac{1}{N-1} \sum_{i=1}^N (\bm Y_i - \bar{\bm Y}) (\bm Y_i - \bar{\bm Y})^\prime \right\} \bm H \nonumber\\
& = & \frac{1}{2^{2(K-1)}} 
\left\{ 
\sum_{j=1}^{2^K} \tilde{\bm h}_j^\prime \tilde{\bm h}_j S^2(\bm z_j) + \sum_{j \ne j^\prime} \tilde{\bm h}_j^\prime \tilde{\bm h}_{j^\prime} S(\bm z_j, \bm z_{j^\prime})
\right\},
\end{eqnarray}
which implies that
\begin{eqnarray}\label{eq:var-2}
\sum_{j \ne j^\prime} \tilde{\bm h}_j^\prime \tilde{\bm h}_{j^\prime} S(\bm z_j, \bm z_{j^\prime})
& = & \frac{2^{2(K-1)}}{N-1} \sum_{i=1}^N (\bm \tau_i - \bm \tau) (\bm \tau_i - \bm \tau)^\prime - \sum_{j=1}^{2^K} \tilde{\bm h}_j^\prime \tilde{\bm h}_j S^2(\bm z_j).
\end{eqnarray}
Substitute the last term in \eqref{eq:var-1} with the right hand side of \eqref{eq:var-2}, we have \eqref{eq:var}. 
\end{proof}

To estimate \eqref{eq:var}, we substitute 
$
S^2(\bm z)
$ 
by its unbiased estimator \citep{Cochran:1977}:
$$
s^2(\bm z_j) = \frac{1}{n_j-1} \sum_{i:W_i(\bm z_j)=1} \{ Y_i^\textrm{obs} - \bar Y^\textrm{obs}(\bm z_j) \}^2,
$$
ignore the second term in the right hand side of \eqref{eq:var}, and obtain the ``Neymanian'' estimator:
\begin{equation}\label{eq:vest}
\widehat \mathrm{Cov}_\textrm{Ney} ( \hat{\bm \tau}_\textrm{RI} ) = \frac{1}{2^{2(K-1)}} \sum_{j=1}^{2^K} \frac{1}{n_j}\tilde{\bm h}_j^\prime \tilde{\bm h}_j s^2(\bm z_j),
\end{equation}
whose bias is
\begin{eqnarray*}
\mathrm{E} \left\{ \widehat \mathrm{Cov}_\textrm{Ney} ( \hat{\bm \tau}_{\textrm{RI}} ) \right\} - \mathrm{Cov} ( \hat{\bm \tau}_{\textrm{RI}} ) =  \frac{1}{N(N-1)} \sum_{i=1}^N (\bm \tau_i - \bm \tau)(\bm \tau_i - \bm \tau)^\prime.
\end{eqnarray*}
In particular, the variance estimator of each component of $\hat {\bm \tau}_\textrm{RI}$ is ``conservative'' \citep{Imbens:2015}, because it always has a nonnegative bias.

\section{The Equivalence between Randomization-based and Regression-based Inferences}\label{sec:reg}

\subsection{Regression-based Inference}

Unlike randomization-based inference, the regression-based inference framework treats the the observed outcome as the ``dependent variable'' of a linear model and the treatment factors (along with their interactions) as ``independent variables.'' To formally define the regression-based estimator, without loss of generality, we assign the first $n_1$ units to treatment $\bm z_1,$ the next $n_2$ units to $\bm z_2$ et. al., which implies the following ``regression'' matrix:
$$
\bm X = 
(\tilde{\bm x}_1^\prime, \ldots, \tilde{\bm x}_{2^K}^\prime)^\prime
=
(
\underbrace{\tilde{\bm h}_1^\prime, \ldots, \tilde{\bm h}_1^\prime}_{\mathrm{repeat} \; n_1 \; \mathrm{times}}
,
\ldots
,
\underbrace{\tilde{\bm h}_{2^K}^\prime, \ldots, \tilde{\bm h}_{2^K}^\prime}_{\mathrm{repeat} \; n_{2^K} \; \mathrm{times}}
)^\prime.
$$
We define the regression-based estimator as:
\begin{equation}
\label{eq:reg}
\hat {\bm \tau}_\textrm{OLS} = 2 \hat {\bm \beta}_\textrm{OLS} = 2 (\bm X^\prime \bm X)^{-1} \bm X^\prime \bm Y^\textrm{obs}.
\end{equation}
To quantify the uncertainty of $\hat {\bm \tau}_\textrm{OLS},$ we consider the following amended Huber-White covariance estimator \citep{MacKinnon:1985}:
\begin{equation}\label{eq:vestreg}
\widehat \mathrm{Cov}_{\textrm{HW}} ( \hat {\bm \tau}_\textrm{OLS} )
= 4N (\bm X^\prime \bm X)^{-1} \left( \frac{1}{N} \sum_{i=1}^N \tilde{\bm x}_i^\prime \tilde{\bm x}_i \frac{\hat e_i^2}{1-\tilde{\bm x}_i (\bm X^\prime \bm X)^{-1} \tilde{\bm x}_i^\prime}\right) (\bm X^\prime \bm X)^{-1},
\end{equation}
where $\hat e_i$ is the estimated residual of unit $i.$ 

\subsection{The Equivalence}

To demonstrate the equivalence between randomization-based and regression-based inferences, we first show the point-wise equivalence between the randomization-based and regression-based estimators. Although this is a well-known result, we provide a direct proof for completeness.

\begin{prop}
\label{prop:pequi}
The randomization-based and regression-based estimators of $\bm \tau$ are point-wisely equivalent, i.e., 
$
\hat {\bm \tau}_\textrm{RI}  = \hat {\bm \tau}_\textrm{OLS}. 
$
\end{prop}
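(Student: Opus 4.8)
The plan is to rewrite both the cross-product matrix $\bm X^\prime \bm X$ and the moment vector $\bm X^\prime \bm Y^\textrm{obs}$ in terms of the model matrix $\bm H$, the diagonal matrix of group sizes $\bm D = \mathrm{diag}(n_1, \ldots, n_{2^K})$, and the group means $\bar{\bm Y}^\textrm{obs}$, and then exploit the invertibility of $\bm H$ to collapse the ordinary least squares estimator onto those group means. Because every row of $\bm X$ assigned to treatment $\bm z_j$ equals $\tilde{\bm h}_j$ and there are $n_j$ of them, grouping the $N$ summands by treatment combination gives $\bm X^\prime \bm X = \sum_{j=1}^{2^K} n_j \tilde{\bm h}_j^\prime \tilde{\bm h}_j = \bm H^\prime \bm D \bm H$ and, similarly, $\bm X^\prime \bm Y^\textrm{obs} = \sum_{j=1}^{2^K} n_j \tilde{\bm h}_j^\prime \bar Y^\textrm{obs}(\bm z_j) = \bm H^\prime \bm D \bar{\bm Y}^\textrm{obs}$, where the second identity uses $n_j \bar Y^\textrm{obs}(\bm z_j) = \sum_{i:W_i(\bm z_j)=1} Y_i^\textrm{obs}$.

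The key step is that $\bm H$ is a square $2^K \times 2^K$ matrix, and the orthogonality relation $\bm H \bm H^\prime = 2^K \bm I_{2^K}$ in \eqref{eq:orth} shows it is invertible with $\bm H^{-1} = 2^{-K} \bm H^\prime$. Consequently $(\bm H^\prime \bm D \bm H)^{-1} = \bm H^{-1} \bm D^{-1} (\bm H^\prime)^{-1}$, and substituting into \eqref{eq:reg} the two factors of $\bm D$ cancel:
\begin{equation*}
\hat{\bm \beta}_\textrm{OLS} = (\bm H^\prime \bm D \bm H)^{-1} \bm H^\prime \bm D \bar{\bm Y}^\textrm{obs} = \bm H^{-1} \bm D^{-1} (\bm H^\prime)^{-1} \bm H^\prime \bm D \bar{\bm Y}^\textrm{obs} = \bm H^{-1} \bar{\bm Y}^\textrm{obs} = \frac{1}{2^K} \bm H^\prime \bar{\bm Y}^\textrm{obs}.
\end{equation*}
Multiplying by $2$ and comparing with \eqref{eq:est} then yields $\hat{\bm \tau}_\textrm{OLS} = 2^{-(K-1)} \bm H^\prime \bar{\bm Y}^\textrm{obs} = \hat{\bm \tau}_\textrm{RI}$, which is the claim.

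The main obstacle I anticipate is conceptual rather than computational: one must recognize that the design is saturated, i.e., the regression has exactly $2^K$ distinct covariate profiles $\tilde{\bm h}_1, \ldots, \tilde{\bm h}_{2^K}$ matching the number of parameters, so that $\bm X$ has full column rank and $\bm H$ is genuinely square and invertible. It is precisely this squareness that makes the unequal weights $\bm D$ cancel, which explains why the unbalanced case ($n_j$ not all equal) behaves no differently from the balanced case treated in \cite{Dasgupta:2015}. The one place to be careful is bookkeeping: since I need $\bm H^{-1} = 2^{-K}\bm H^\prime$, I would invoke the row-orthogonality half $\bm H \bm H^\prime = 2^K \bm I_{2^K}$ of \eqref{eq:orth}, and keep the index $j$ consistently running over the rows $\tilde{\bm h}_j$ of $\bm H$ throughout.
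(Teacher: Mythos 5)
Your proof is correct and rests on the same ingredients as the paper's: grouping the rows of $\bm X$ by treatment combination and invoking the orthogonality relation \eqref{eq:orth} for $\bm H$. The only difference is presentational — you solve the normal equations directly via the factorization $\bm X^\prime \bm X = \bm H^\prime \bm D \bm H$ and cancel $\bm D$, whereas the paper verifies that $2^{-K}\bm H^\prime \bar{\bm Y}^\textrm{obs}$ satisfies $\bm X^\prime \bm X \bm H^\prime \bar{\bm Y}^\textrm{obs} = 2^K \bm X^\prime \bm Y^\textrm{obs}$; both are valid and essentially equivalent.
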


\begin{proof}[Proof of Proposition \ref{prop:pequi}]
On the one hand, 
$$
\bm X^\prime \bm Y^\textrm{obs} = \sum_{i=1}^N \tilde{\bm x}_i^\prime Y_i^\textrm{obs} = \sum_{j=1}^{2^K} n_j \tilde{\bm h}_j^\prime \bar Y^\textrm{obs}(\bm z_j).
$$
On the other hand,
$$
\bm H^\prime \bar{\bm Y}^\textrm{obs} = \sum_{j=1}^{2^K} \tilde{\bm h}_j^\prime \bar Y^\textrm{obs}(\bm z_j).
$$
Moreover, by \eqref{eq:orth},
\begin{eqnarray*}
\bm X^\prime \bm X \bm H^\prime \bar{\bm Y}^\textrm{obs} 
& = & \left( \sum_{j=1}^{2^K} n_j \tilde{\bm h}_j^\prime \tilde{\bm h}_j \right) \left( \sum_{j=1}^{2^K} \tilde{\bm h}_j^\prime \bar Y^\textrm{obs}(\bm z_j)\right) \\
& = & 2^K \left( \sum_{j=1}^{2^K} n_j \tilde{\bm h}_j^\prime \bar Y^\textrm{obs}(\bm z_j) \right) \\
& = & 2^K \bm X^\prime \bm Y^\textrm{obs}.
\end{eqnarray*}
Therefore
\begin{equation*}
\frac{1}{2^K} \bm H^\prime \bar{\bm Y}^\textrm{obs} = (\bm X^\prime \bm X )^{-1} \bm X^\prime \bm Y^\textrm{obs},
\end{equation*}
which completes the proof.
\end{proof}

To show the equivalence between the randomization-based and regression-based confidence regions, we rely on the following lemmas.

\begin{lemma}\label{lemma:matrix}
The regression matrix has the following properties:
$$
(\bm X^\prime \bm X)^{-1} \tilde{\bm h}_j^\prime \tilde{\bm h}_j (\bm X^\prime \bm X)^{-1} = \frac{1}{2^{2K} n_j^2} \tilde{\bm h}_j^\prime \tilde{\bm h}_j.
$$
\end{lemma}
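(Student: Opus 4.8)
The plan is to exploit the facts that $\bm X'\bm X$ is a weighted sum of the rank-one outer products $\tilde{\bm h}_j'\tilde{\bm h}_j$ and that the rows of the model matrix are mutually orthogonal. First I would compute $\bm X'\bm X$ directly from the block structure of $\bm X$: since $\bm X$ stacks $n_j$ copies of the row $\tilde{\bm h}_j$ for each $j$, the same manipulation already used in the proof of Proposition~\ref{prop:pequi} gives
$$
\bm X'\bm X = \sum_{i=1}^N \tilde{\bm x}_i'\tilde{\bm x}_i = \sum_{j=1}^{2^K} n_j\,\tilde{\bm h}_j'\tilde{\bm h}_j .
$$

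The key step is to recognize that each column vector $\tilde{\bm h}_j'$ is an eigenvector of $\bm X'\bm X$. Reading the relation $\bm H\bm H' = 2^K\bm I_{2^K}$ from \eqref{eq:orth} entrywise yields $\tilde{\bm h}_{j'}\tilde{\bm h}_j' = 2^K\delta_{jj'}$, where the product is a scalar. Applying $\bm X'\bm X$ to $\tilde{\bm h}_j'$ and pulling this scalar out, I would obtain
$$
\bm X'\bm X\,\tilde{\bm h}_j' = \sum_{j'=1}^{2^K} n_{j'}\,\tilde{\bm h}_{j'}'\bigl(\tilde{\bm h}_{j'}\tilde{\bm h}_j'\bigr) = 2^K n_j\,\tilde{\bm h}_j',
$$
so $\tilde{\bm h}_j'$ is an eigenvector with eigenvalue $2^K n_j$. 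Inverting gives $(\bm X'\bm X)^{-1}\tilde{\bm h}_j' = (2^K n_j)^{-1}\tilde{\bm h}_j'$, and taking transposes together with the symmetry of $(\bm X'\bm X)^{-1}$ produces the companion identity $\tilde{\bm h}_j(\bm X'\bm X)^{-1} = (2^K n_j)^{-1}\tilde{\bm h}_j$.

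Finally I would combine the two one-sided identities by grouping the product as
$$
(\bm X'\bm X)^{-1}\tilde{\bm h}_j'\tilde{\bm h}_j(\bm X'\bm X)^{-1} = \bigl[(\bm X'\bm X)^{-1}\tilde{\bm h}_j'\bigr]\bigl[\tilde{\bm h}_j(\bm X'\bm X)^{-1}\bigr] = \frac{1}{2^{2K}n_j^2}\,\tilde{\bm h}_j'\tilde{\bm h}_j,
$$
which is exactly the claimed identity.

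I do not anticipate a genuine obstacle; the whole argument collapses to the eigenvector computation, which is immediate once $\bm X'\bm X$ is written as the weighted sum of outer products. The only point requiring care is the bookkeeping of row versus column vectors—$\tilde{\bm h}_j$ being a $1\times 2^K$ row and $\tilde{\bm h}_j'$ its transpose—and the explicit use of the symmetry of $(\bm X'\bm X)^{-1}$ to pass from the left eigenrelation to the right one before multiplying.
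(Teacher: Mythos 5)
Your proof is correct and follows essentially the same route as the paper: both rest on writing $\bm X^\prime \bm X = \sum_j n_j \tilde{\bm h}_j^\prime \tilde{\bm h}_j$ and invoking the orthogonality $\tilde{\bm h}_{j^\prime} \tilde{\bm h}_j^\prime = 2^K \delta_{jj^\prime}$ from \eqref{eq:orth}. The only difference is presentational --- you phrase the computation as an eigenvector relation for $\tilde{\bm h}_j^\prime$ and pass to the inverse via symmetry, while the paper multiplies the rank-one matrix $\tilde{\bm h}_j^\prime \tilde{\bm h}_j$ by $\bm X^\prime \bm X$ on both sides and then inverts.
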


\begin{proof}[Proof of Lemma \ref{lemma:matrix}]
By \eqref{eq:orth},
\begin{eqnarray*}
\bm X^\prime \bm X \tilde{\bm h}_j^\prime \tilde{\bm h}_j 
& = & n_j \tilde{\bm h}_j^\prime \tilde{\bm h}_j \tilde{\bm h}_j^\prime \tilde{\bm h}_j + \sum_{j^\prime \ne j} n_{j^\prime} \tilde{\bm h}_{j^\prime}^\prime \tilde{\bm h}_{j^\prime} \tilde{\bm h}_j^\prime \tilde{\bm h}_j \\
& = &  2^K n_j \tilde{\bm h}_j^\prime \tilde{\bm h}_j.
\end{eqnarray*}
Similarly,
\begin{eqnarray*}
\bm X^\prime \bm X \tilde{\bm h}_j^\prime \tilde{\bm h}_j \bm X^\prime \bm X 
& = &  2^K n_j \tilde{\bm h}_j^\prime \tilde{\bm h}_j \bm X^\prime \bm X \\
& = & 2^{2K} n_j^2  \tilde{\bm h}_j^\prime \tilde{\bm h}_j,
\end{eqnarray*}
which completes the proof.
\end{proof}

\begin{lemma}\label{lemma:leverage}
If unit $i$ is assigned treatment $\bm z_j,$ its ``leverage'' is:
$$
\tilde{\bm x}_i (\bm X^\prime \bm X)^{-1} \tilde{\bm x}_i^\prime = \tilde{\bm h}_j (\bm X^\prime \bm X)^{-1} \tilde{\bm h}_j^\prime = \frac{1}{n_j}.
$$

\end{lemma}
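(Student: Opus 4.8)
The first equality in the statement is immediate from the construction of the regression matrix $\bm X$: since its rows are the treatment combinations $\tilde{\bm h}_j$ repeated $n_j$ times, a unit $i$ assigned to $\bm z_j$ has row $\tilde{\bm x}_i = \tilde{\bm h}_j$. Substituting this into the leverage $\tilde{\bm x}_i (\bm X^\prime \bm X)^{-1} \tilde{\bm x}_i^\prime$ reduces the problem to evaluating the quadratic form $\tilde{\bm h}_j (\bm X^\prime \bm X)^{-1} \tilde{\bm h}_j^\prime$, so no work is needed for that step.

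The core of the plan is to recognize that $\tilde{\bm h}_j^\prime$ is an eigenvector of $\bm X^\prime \bm X$. Writing $\bm X^\prime \bm X = \sum_{j^\prime=1}^{2^K} n_{j^\prime} \tilde{\bm h}_{j^\prime}^\prime \tilde{\bm h}_{j^\prime}$ and using the row-orthogonality from \eqref{eq:orth}, namely $\tilde{\bm h}_{j^\prime} \tilde{\bm h}_j^\prime = 2^K$ when $j^\prime = j$ and $0$ otherwise, I would compute directly that $\bm X^\prime \bm X \, \tilde{\bm h}_j^\prime = 2^K n_j \, \tilde{\bm h}_j^\prime$; only the $j^\prime = j$ term survives. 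Equivalently, this identity can be extracted from the first display in the proof of Lemma \ref{lemma:matrix}, $\bm X^\prime \bm X \, \tilde{\bm h}_j^\prime \tilde{\bm h}_j = 2^K n_j \, \tilde{\bm h}_j^\prime \tilde{\bm h}_j$, by post-multiplying by $\tilde{\bm h}_j^\prime$ and cancelling the scalar $\tilde{\bm h}_j \tilde{\bm h}_j^\prime = 2^K$.

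With the eigenvector relation in hand, inverting gives $(\bm X^\prime \bm X)^{-1} \tilde{\bm h}_j^\prime = (2^K n_j)^{-1} \tilde{\bm h}_j^\prime$. Left-multiplying by the row vector $\tilde{\bm h}_j$ and again invoking $\tilde{\bm h}_j \tilde{\bm h}_j^\prime = 2^K$ then yields $\tilde{\bm h}_j (\bm X^\prime \bm X)^{-1} \tilde{\bm h}_j^\prime = (2^K n_j)^{-1} \cdot 2^K = 1/n_j$, as claimed.

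This argument is essentially a routine application of the orthogonality structure, so there is no substantial obstacle. The only point requiring a little care is the rank-one cancellation used to pass from the matrix identity of Lemma \ref{lemma:matrix} to the vector identity $\bm X^\prime \bm X \, \tilde{\bm h}_j^\prime = 2^K n_j \, \tilde{\bm h}_j^\prime$; computing $\bm X^\prime \bm X \, \tilde{\bm h}_j^\prime$ directly via \eqref{eq:orth}, as above, sidesteps this subtlety entirely and is the cleaner route.
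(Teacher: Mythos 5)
Your proof is correct, but it takes a genuinely different and arguably cleaner route than the paper's. The paper proves the identity by squaring: it applies Lemma \ref{lemma:matrix} to write $\{\tilde{\bm h}_j (\bm X^\prime \bm X)^{-1} \tilde{\bm h}_j^\prime\}^2 = \tilde{\bm h}_j \{(\bm X^\prime \bm X)^{-1} \tilde{\bm h}_j^\prime \tilde{\bm h}_j (\bm X^\prime \bm X)^{-1}\} \tilde{\bm h}_j^\prime = 1/n_j^2$, and then invokes the nonnegativity of the quadratic form $\tilde{\bm h}_j (\bm X^\prime \bm X)^{-1} \tilde{\bm h}_j^\prime$ to select the positive square root. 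Your eigenvector argument --- $\bm X^\prime \bm X \, \tilde{\bm h}_j^\prime = 2^K n_j \tilde{\bm h}_j^\prime$ from the row-orthogonality in \eqref{eq:orth}, hence $(\bm X^\prime \bm X)^{-1} \tilde{\bm h}_j^\prime = (2^K n_j)^{-1} \tilde{\bm h}_j^\prime$ --- computes the leverage in one step and avoids both the squaring detour and the sign argument. It also subsumes Lemma \ref{lemma:matrix} as an immediate corollary, since by symmetry of $(\bm X^\prime \bm X)^{-1}$ one gets $(\bm X^\prime \bm X)^{-1}\tilde{\bm h}_j^\prime \tilde{\bm h}_j (\bm X^\prime \bm X)^{-1} = (2^{2K} n_j^2)^{-1} \tilde{\bm h}_j^\prime \tilde{\bm h}_j$ directly. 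Both arguments rest on the same orthogonality structure; what yours buys is a single spectral fact from which both auxiliary lemmas follow, at no extra cost.
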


\begin{proof}[Proof of Lemma \ref{lemma:leverage}]
On the one hand, by Lemma \ref{lemma:matrix} and \eqref{eq:orth},
\begin{eqnarray*} 
\{ \tilde{\bm h}_j (\bm X^\prime \bm X)^{-1} \tilde{\bm h}_j^\prime \}^2 
& = & \tilde{\bm h}_j \left\{ (\bm X^\prime \bm X)^{-1} \tilde{\bm h}_j^\prime \tilde{\bm h}_j (\bm X^\prime \bm X)^{-1} \right\}\tilde{\bm h}_j^\prime \\
& = & \frac{1}{2^{2K} n_j^2} \tilde{\bm h}_j \tilde{\bm h}_j^\prime \tilde{\bm h}_j \tilde{\bm h}_j^\prime \\
& = & \frac{1}{n_j^2}.
\end{eqnarray*}
On the other hand, $\tilde{\bm h}_j (\bm X^\prime \bm X)^{-1} \tilde{\bm h}_j^\prime \ge 0.$ The proof is complete.
\end{proof}

\begin{prop}
\label{prop:vequi}
The covariance estimators of the randomization-based and regression-based estimators of $\bm \tau$ are equivalent, i.e., 
$
\widehat \mathrm{Cov}_{\textrm{Ney}} ( \hat {\bm \tau}_\textrm{RI} ) = \widehat \mathrm{Cov}_{\textrm{HW}} ( \hat {\bm \tau}_\textrm{OLS} ).
$
\end{prop}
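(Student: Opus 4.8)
The plan is to start from the Huber--White expression \eqref{eq:vestreg} and reduce it, factor by factor, to the Neymanian form \eqref{eq:vest}; the two auxiliary Lemmas \ref{lemma:matrix} and \ref{lemma:leverage} are precisely the tools tailored for this reduction. Two structural observations drive the argument: first, every unit assigned to $\bm z_j$ shares the same covariate row $\tilde{\bm x}_i = \tilde{\bm h}_j$, so the sum over the $N$ units collapses into a sum over the $2^K$ treatment groups; second, because the $2^K$ factorial model is saturated, the OLS fitted value for each unit coincides with its group mean, which pins down the residuals explicitly.

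First I would treat the denominator appearing in \eqref{eq:vestreg}. For a unit $i$ assigned to $\bm z_j$, Lemma \ref{lemma:leverage} gives $\tilde{\bm x}_i (\bm X^\prime \bm X)^{-1} \tilde{\bm x}_i^\prime = 1/n_j$, so the denominator is $1 - 1/n_j = (n_j-1)/n_j$ uniformly across the group. Next I would identify the residuals. Invoking Proposition \ref{prop:pequi} (so that $\hat{\bm\beta}_\textrm{OLS} = 2^{-K} \bm H^\prime \bar{\bm Y}^\textrm{obs}$) together with the orthogonality relation \eqref{eq:orth}, the fitted value $\tilde{\bm h}_j \hat{\bm\beta}_\textrm{OLS}$ for any unit in group $j$ equals $\bar Y^\textrm{obs}(\bm z_j)$, since $\tilde{\bm h}_j \bm H^\prime = 2^K \bm e_j^\prime$ selects the $j$th coordinate. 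Hence $\hat e_i = Y_i^\textrm{obs} - \bar Y^\textrm{obs}(\bm z_j)$ and $\sum_{i: W_i(\bm z_j)=1} \hat e_i^2 = (n_j-1)\, s^2(\bm z_j)$.

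With these two facts the inner bracket of \eqref{eq:vestreg} becomes $\frac{1}{N} \sum_{j=1}^{2^K} \tilde{\bm h}_j^\prime \tilde{\bm h}_j \frac{n_j}{n_j-1} (n_j-1)\, s^2(\bm z_j) = \frac{1}{N} \sum_{j=1}^{2^K} n_j\, \tilde{\bm h}_j^\prime \tilde{\bm h}_j\, s^2(\bm z_j)$, where I have grouped units by treatment and used that the combined weight $(n_j-1)/(1-1/n_j)$ simplifies to $n_j$. Substituting this back and cancelling the factor $N$, I would then apply Lemma \ref{lemma:matrix} to each summand, replacing the sandwich $(\bm X^\prime \bm X)^{-1} \tilde{\bm h}_j^\prime \tilde{\bm h}_j (\bm X^\prime \bm X)^{-1}$ by $(2^{2K} n_j^2)^{-1} \tilde{\bm h}_j^\prime \tilde{\bm h}_j$. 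Collecting the scalar constants through $4/2^{2K} = 1/2^{2(K-1)}$ then reproduces \eqref{eq:vest} verbatim.

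The only genuinely non-mechanical step is the identification of the residuals: one must recognize that a $2^K$ factorial regression on the full set of effects is saturated at the cell level, so OLS exactly reproduces the group means and the within-group residual sum of squares is $(n_j-1)\, s^2(\bm z_j)$. Everything else is bookkeeping that Lemmas \ref{lemma:matrix} and \ref{lemma:leverage} are designed to absorb. I would therefore spend the most care on the fitted-value computation, since an incorrect residual would corrupt both the $(n_j-1)$ factor produced by $s^2(\bm z_j)$ and the final normalizing constant.
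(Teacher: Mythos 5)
Your proposal is correct and follows essentially the same route as the paper's own proof: identify the residuals as within-group deviations via Proposition \ref{prop:pequi} and the orthogonality of $\bm H$, collapse the sum over units into a sum over the $2^K$ treatment groups, and then invoke Lemmas \ref{lemma:leverage} and \ref{lemma:matrix} to handle the leverage denominator and the sandwich term, respectively. The only difference is cosmetic ordering --- you simplify the $n_j/(n_j-1)$ weights before applying Lemma \ref{lemma:matrix}, whereas the paper applies the sandwich reduction first --- and your arithmetic checks out.
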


\begin{proof}[Proof of Proposition \ref{prop:vequi}]
If unit $i$ is assigned treatment $\bm z_j,$ its estimated residual is
\begin{eqnarray*}
\hat e_i 
& = & Y_i^\textrm{obs} - \tilde{\bm h}_j (\bm X^\prime \bm X)^{-1} \bm X^\prime \bm Y^\textrm{obs} \\
& = & Y_i^\textrm{obs} - 2^{-K} \tilde{\bm h}_j \bm H^\prime \bar{\bm Y}^\textrm{obs} \\
& = & Y_i^\textrm{obs} - \bar Y^\textrm{obs} (\bm z_j).
\end{eqnarray*}
The second equation holds by Proposition \ref{prop:pequi}. Therefore, by grouping the units by their treatment assignments and apply Lemmas \ref{lemma:matrix} and \ref{lemma:leverage}, we have
\begin{eqnarray*}
\widehat \mathrm{Cov}_{\textrm{HW}} ( \hat {\bm \tau}_\textrm{OLS} )
& = & 
4 (\bm X^\prime \bm X)^{-1} 
\left\{
\sum_{j=1}^{2^K} \sum_{i: W_i(\bm z_j) = 1} \tilde{\bm h}_j^\prime \tilde{\bm h}_j \frac{\hat e_i^2}{1-\tilde{\bm h}_j 
(\bm X^\prime \bm X)^{-1} \tilde{\bm h}_j^\prime} 
\right\}
(\bm X^\prime \bm X)^{-1} \\
& = & 
4 \sum_{j=1}^{2^K} \frac{(\bm X^\prime \bm X)^{-1} \tilde{\bm h}_j^\prime \tilde{\bm h}_j (\bm X^\prime \bm X)^{-1}}{1-\tilde{\bm h}_j(\bm X^\prime \bm X)^{-1} \tilde{\bm h}_j^\prime} \sum_{i: W_i(\bm z_j) = 1} \{ Y_i^\textrm{obs} - \bar Y^\textrm{obs}(\bm z_j) \}^2 \\
& = & 
\frac{4}{2^{2K}}
\sum_{j=1}^{2^K} \frac{\tilde{\bm h}_j^\prime \tilde{\bm h}_j}{n_j(n_j - 1)}\sum_{i: W_i(\bm z_j) = 1} \{ Y_i^\textrm{obs} - \bar Y^\textrm{obs}(\bm z_j) \}^2 \\
& = & \frac{1}{2^{2(K-1)}} \sum_{j = 1}^{2^K} \frac{1}{n_j}\tilde{\bm h}_j^\prime \tilde{\bm h}_j s^2(\bm z_j) \\
& = & \widehat \mathrm{Cov}_{\textrm{Ney}} ( \hat {\bm \tau}_\textrm{RI} ).
\end{eqnarray*}
\end{proof}

\section{Extensions}\label{sec:extension}

For balanced designs, it is usually possible to simplify the derivations of the covariance estimators \citep{Samii:2012}. Consequently, researchers often assume balanced designs \citep[e.g.,][]{Gadbury:2001, Dasgupta:2015}. In balanced $2^K$ factorial designs, we allow $N=2^K r$ $(r \ge 2)$ experiment units and assign $r$ to each treatment; Propositions \ref{prop:var} and \ref{prop:vequi} then reduce to the main results of \cite{Dasgupta:2015}, summarized in the following corollary. 

\begin{corollary}
In balanced $2^K$ factorial designs with $N = 2^K r$ experimental units,
\begin{equation}\label{eq:var-balanced}
\widehat \mathrm{Cov}_\textrm{Ney} ( \hat{\bm \tau}_\textrm{RI} ) 
= \widehat \mathrm{Cov}_{\textrm{HW}} ( \hat {\bm \tau}_\textrm{OLS} )
= \frac{1}{2^{2(K-1)}r} \sum_{j=1}^{2^K} \tilde{\bm h}_j^\prime \tilde{\bm h}_j s^2(\bm z_j). 
\end{equation}
\end{corollary}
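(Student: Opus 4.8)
The plan is to obtain both equalities essentially for free from results already established, so the argument is a direct specialization rather than a fresh derivation. The first equality, $\widehat{\mathrm{Cov}}_{\textrm{Ney}}(\hat{\bm\tau}_\textrm{RI}) = \widehat{\mathrm{Cov}}_{\textrm{HW}}(\hat{\bm\tau}_\textrm{OLS})$, holds for \emph{any} general $2^K$ factorial design by Proposition \ref{prop:vequi}; in particular it holds when the design happens to be balanced. Hence nothing new needs to be shown for that half of the claim — I would simply invoke Proposition \ref{prop:vequi}.

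For the second equality, I would start from the closed form of the Neymanian estimator in \eqref{eq:vest},
$$
\widehat{\mathrm{Cov}}_{\textrm{Ney}}(\hat{\bm\tau}_\textrm{RI}) = \frac{1}{2^{2(K-1)}}\sum_{j=1}^{2^K}\frac{1}{n_j}\tilde{\bm h}_j^\prime\tilde{\bm h}_j\, s^2(\bm z_j),
$$
and impose the balance condition $n_1 = \cdots = n_{2^K} = r$. With every $n_j$ equal to $r$, the factor $1/n_j$ becomes the common constant $1/r$ and can be pulled out of the summation, yielding
$$
\widehat{\mathrm{Cov}}_{\textrm{Ney}}(\hat{\bm\tau}_\textrm{RI}) = \frac{1}{2^{2(K-1)}r}\sum_{j=1}^{2^K}\tilde{\bm h}_j^\prime\tilde{\bm h}_j\, s^2(\bm z_j),
$$
which is exactly the claimed expression. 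Combining this with the first equality completes the chain of identities in \eqref{eq:var-balanced}.

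I expect no substantive obstacle: the corollary is a straightforward specialization of the general propositions, and the only thing worth checking is consistency of the counts, namely that $N = 2^K r$ is compatible with the constraint $\sum_{j=1}^{2^K} n_j = N$ once each $n_j = r$, which is immediate since $\sum_{j=1}^{2^K} r = 2^K r = N$. The mild conceptual point — rather than a difficulty — is simply to record that the equivalence of the two covariance estimators is inherited from the unbalanced case and does not require re-proving in the balanced setting.
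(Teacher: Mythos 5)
Your proposal is correct and matches the paper's treatment: the paper states this corollary without a separate proof, presenting it as an immediate specialization of Proposition \ref{prop:vequi} and of the Neymanian estimator \eqref{eq:vest} under $n_1=\cdots=n_{2^K}=r$, which is exactly the argument you give. Nothing further is needed.
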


Next, we discuss covariance estimation in standard regression analysis, which is commonly used by practitioners \citep[e.g.,][]{Chakraborty:2009, Collins:2009}. Under the homoscedasticity assumption, we let
\begin{equation}\label{eq:linear-model}
\bm Y^\textrm{obs} = \frac{1}{2} \bm X \bm \tau + \bm \epsilon,
\quad
\bm \epsilon \sim N(\bm 0, \sigma^2 \bm I_N),
\end{equation}
which leads to the following ``standard'' covariance estimator
\begin{equation}\label{eq:var-balanced-2}
\widehat \mathrm{Cov}_\textrm{HE} ( \hat {\bm \tau}_\textrm{OLS} ) 
= 4 \hat \sigma^2 (\bm X^\prime \bm X)^{-1}
= \frac{4}{N-2^K} \sum_{i=1}^N \hat e_i^2 (\bm X^\prime \bm X)^{-1}.
\end{equation}
Straightforward arithmetic suggests that \eqref{eq:vestreg} and \eqref{eq:var-balanced-2} are different, even in balanced $2^K$ factorial designs. However, the following corollary suggests that in balanced $2^K$ factorial designs \eqref{eq:var-balanced-2} does give correct variance estimator of any component of $\hat{\bm \tau}_\textrm{OLS}.$

\begin{corollary}\label{coro:var}
In balanced $2^K$ factorial designs with $N = 2^K r$ experimental units, the variance estimator of any component of $\hat{\bm \tau}_\textrm{OLS}$ under \eqref{eq:linear-model} is equivalent to that in \eqref{eq:var-balanced}. 
\end{corollary}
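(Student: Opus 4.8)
The plan is to show that, although the full matrices \eqref{eq:vestreg} and \eqref{eq:var-balanced-2} disagree (their off-diagonal entries differ), their diagonal entries — the estimated variances of the individual components of $\hat{\bm \tau}_\textrm{OLS}$ — coincide in the balanced case. The target is therefore purely a statement about diagonal entries, and I would reduce everything to a single scalar identity.

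First I would exploit balance to simplify $\bm X^\prime \bm X$. Since $n_j = r$ for every $j,$ we have $\bm X^\prime \bm X = r \sum_{j=1}^{2^K} \tilde{\bm h}_j^\prime \tilde{\bm h}_j = r \bm H^\prime \bm H = 2^K r \, \bm I_{2^K} = N \bm I_{2^K}$ by the orthogonality \eqref{eq:orth}. Hence $(\bm X^\prime \bm X)^{-1} = N^{-1} \bm I_{2^K}$ is a scalar multiple of the identity, and the standard estimator \eqref{eq:var-balanced-2} becomes $\widehat \mathrm{Cov}_\textrm{HE} ( \hat{\bm \tau}_\textrm{OLS} ) = (4 \hat\sigma^2 / N) \bm I_{2^K};$ in particular every diagonal entry equals $4 \hat\sigma^2 / N.$

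Next I would evaluate $\hat\sigma^2$ explicitly. Using the residual identity $\hat e_i = Y_i^\textrm{obs} - \bar Y^\textrm{obs}(\bm z_j)$ already established in the proof of Proposition \ref{prop:vequi}, grouping units by treatment gives $\sum_{i=1}^N \hat e_i^2 = \sum_{j=1}^{2^K} (n_j - 1) s^2(\bm z_j) = (r-1) \sum_{j=1}^{2^K} s^2(\bm z_j).$ Combined with $N - 2^K = 2^K(r-1),$ this yields $\hat\sigma^2 = 2^{-K} \sum_{j=1}^{2^K} s^2(\bm z_j),$ so each diagonal entry of \eqref{eq:var-balanced-2} equals $4 \hat\sigma^2 / N = 2^{-2(K-1)} r^{-1} \sum_{j=1}^{2^K} s^2(\bm z_j).$

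Finally I would read off the corresponding diagonal entry of \eqref{eq:var-balanced}. Since the entries of each $\tilde{\bm h}_j$ are $\pm 1,$ every diagonal entry of the rank-one matrix $\tilde{\bm h}_j^\prime \tilde{\bm h}_j$ equals $1,$ so each diagonal entry of \eqref{eq:var-balanced} is $2^{-2(K-1)} r^{-1} \sum_{j=1}^{2^K} s^2(\bm z_j),$ matching the expression just obtained. I do not anticipate a serious obstacle, since the computation is routine once balance collapses $(\bm X^\prime \bm X)^{-1}$ to a multiple of the identity. The only point needing care — and the reason the statement speaks of ``any component'' rather than the whole matrix — is that the homoscedastic estimator forces equal variances and zero covariances across components, whereas \eqref{eq:var-balanced} generally carries nonzero off-diagonal terms of the form $\sum_j \tilde{h}_{j\ell}\,\tilde{h}_{j\ell'} s^2(\bm z_j);$ the agreement is therefore genuinely restricted to the diagonal.
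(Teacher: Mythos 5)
Your proposal is correct and follows essentially the same route as the paper's own proof: use balance to reduce $\bm X^\prime \bm X$ to $N \bm I_{2^K}$ via \eqref{eq:orth}, evaluate $\sum_i \hat e_i^2 = (r-1)\sum_j s^2(\bm z_j)$ from the residual identity, and compare diagonal entries, noting that the diagonal of each $\tilde{\bm h}_j^\prime \tilde{\bm h}_j$ is all ones. Your closing remark about why agreement is restricted to the diagonal is a helpful addition but does not change the argument.
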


\begin{proof}[Proof of Corollary \ref{coro:var}] 
For balanced $2^K$ factorial designs, $\bm X^\prime \bm X = N \bm I_{2^K}.$ Therefore
\begin{eqnarray}\label{eq:coro-var}
\widehat \mathrm{Cov}_\textrm{HE} ( \hat {\bm \tau}_\textrm{OLS} ) 
& = & (\bm X^\prime \bm X)^{-1} \frac{1}{2^{K-2}} \sum_{j=1}^{2^K} \frac{1}{r-1} \sum_{i: W_i(\bm z_j) = 1} \{ Y_i^\textrm{obs} - \bar Y^\textrm{obs}(\bm z_j) \}^2 \nonumber \\
& = & \frac{\bm I_{2^K}}{2^{2(K-1)}r} \sum_{j=1}^{2^K} s^2(\bm z_j).
\end{eqnarray}

Let $\hat {\bm \tau}_\textrm{OLS} = \{ \hat{\bm \tau}_\textrm{OLS} (1), \ldots, \hat{\bm \tau}_\textrm{OLS} (2^K) \}^\prime.$ Therefore, the variance estimator  $\hat{\bm \tau}_\textrm{OLS} (j)$ under \eqref{eq:linear-model} is the $j$th diagonal element of the covariance estimator matrix in \eqref{eq:coro-var}:
$$
\widehat \mathrm{Var}_\textrm{HE} \{ \hat {\bm \tau}_\textrm{OLS} (j) \} = \frac{1}{2^{2(K-1)}r} \sum_{j=1}^{2^K} s^2(\bm z_j),
$$
which equals the $j$th diagonal element of the covariance estimator matrix in  \eqref{eq:var-balanced}. 
\end{proof}

From a practical perspective, in applied analyses of balanced $2^K$ factorial designs where we are only interested in estimating the variances of factorial effects, Corollary \ref{coro:var} assures the validities of standard regression methods.

\section{Concluding Remarks}\label{sec:discuss}

In this paper, we demonstrate the equivalence between randomization-based and regression-based inferences for $2^K$ factorial designs. As pointed out by \cite{Samii:2012}, while regression-based methods may not be favorable for randomized experiments, they in fact can be justified by randomization. Our results show that practitioners can use regressions for $2^K$ factorial designs, as long as they also use the amended Huber-White estimator to quantify uncertainties of the estimated factorial effects.

Our work implies multiple future directions. First, we can generalize our current framework to other factorial designs such as $3^k$ factorial designs or fractional factorial designs. Second, it is possible to further unify randomization-based and regression-based inferences with Bayesian inference. Third, we can incorporate pretreatment covariates to our analysis.

\section*{Acknowledgements}

The author thanks Professors Tirthankar Dasgupta and Joseph K. Blitzstein at Harvard, for their invaluable mentorships, and Professor Peng Ding at UC Berkeley, for helpful suggestions. Thoughtful comments from the Co-Editor-in-Chief, an Associate Editor and a reviewer substantially improved the quality of the paper.

\bibliographystyle{apalike}
\bibliography{factorial_hw}

\end{document}